\newcommand{\ketbra}[2]{\ensuremath{\left| #1 \right\rangle \left\langle #2 \right|}}
\newtheorem{thm}{Theorem}
\newtheorem{lemma}{Lemma}
\newtheorem{mydef}{Definition}
\begin{document}

\title{What would it have looked like if it looked like I were in a superposition?}
\author{Daniel Filan, Joseph J. Hope}
\affiliation{Department of Quantum Science, Research School of Physics and Engineering, Australian National University, Canberra ACT 0200, Australia}
\date{\today}

\begin{abstract}
In this paper we address the question of whether it is possible to obtain evidence that we are in a superposition of different `worlds', as suggested by the relative state interpretation of quantum mechanics. We find that it is impossible to find definitive proof, and that if one wishes to retain reliable memories of which `world' one was in, no evidence at all can be found.  We then show that even for completely linear quantum state evolution, there is a test that can be done to tell if you can be placed in a superposition. 
\end{abstract}

\maketitle

\section{Introduction}
Quantum mechanical objects can be placed in a superposition of states.  
How large these superpositions can be is very much an open question, particularly given the possibility of the existence of intrinsic universal decoherence processes that resolve the measurement problem, or limitations suggested by some variations of quantum gravity \cite{Arndt2014}.  
It is also a key question to answer given that macroscopic superpositions are critical to proposed new quantum technologies \cite{Ladd2010}.  
The relative state interpretation (RSI) of quantum mechanics, as originally devised by Everett, admits no upper limit, and strongly suggests that we are all in macroscopic superposition states.  It is often suggested that in a universe that evolves linearly, it is impossible to test whether we are in such a superposition, and therefore that the truth of this proposition is not a question that can be addressed by scientific method.  This paper will show that it is formally impossible to determine whether we are currently in a superposition, but that it is possible, at least in principle, to determine experimentally whether we can be placed in such a superposition.

The RSI assumes that there is no physical process of wavefunction collapse upon a measurement, and that the quantum state of the universe only ever undergoes unitary evolution \cite{Everett57,Wallace2012}.  With no wavefunction collapse process, there is nothing preventing arbitrarily large systems becoming entangled, including human experimenters.  This possibility, which historically has been seen as very confronting \cite{Byrne2013}, is precisely how the RSI explains the fact that our everyday experiences seem clearly single-valued.  
If we assume that our experiences are a direct function of the state of our physical brains, then if a brain goes into a superposition while entangled with some quantum system and a measuring apparatus, then the resulting state is best described as a superposition of ``quasi-classical" experiences, in which each brain state is entangled with a single measurement result. 

The RSI requires the fewest axioms to describe, but requires considerable development to explain the quasi-classical experience.  Unlike most interpretations of quantum mechanics, it requires no definition of measurement to be complete.  Unlike the Copenhagen-style interpretations, it has no divide between the quantum and classical world, and is a complete theory that makes firm predictions about mesoscopic systems.  With no non-local evolution at all, it is entirely compatible with all relativistic theories.  Any theory that makes predictions \textit{exactly} as though there were nothing but unitary evolution, we will regard as in an equivalence class with the RSI theory.  Fully defined quantum theories that enforce single-valuedness typically introduce a physical wavefunction collapse process \cite{GRW,tumulka2006relativistic}.  Any wavefunction collapse process will reduce the coherence of subsequent experiments, and these theories choose their parameters such that this collapse is undetectably slow on the scale of current quantum experiments, and undetectably fast for macroscopic superpositions.  

It is well-known how to determine if an external system is in a superposition of distinct states. For example, Eibenberger \textit{et al.}~\cite{eibenberger2013} demonstrate that molecules exceeding a mass of 10,000 amu can be put into a superposition of different spatial positions.  Scaling this process up will either be impossible due to physical decoherence processes such as wavefunction collapse, or else simply a function of technology.  In principle, doing the same thing to a cat or a human is a question of scale.  If it were possible to demonstrate that another human could be in a superposition, then we might be inclined to believe that something like the RSI is correct.  What is perhaps less clear is how to determine if oneself is in a superposition of distinct quasi-classical states, or if such a thing is possible. Some previous proposed tests of the RSI, such as \cite{Deutsch1986} and \cite{Vaidman1998}, have not addressed this aspect.  Other proposed tests (e.g. \cite[Chapter 8]{albert1992quantum}) purport to cause someone to become aware that they are in a superposition of different mental states, but these treatments do not correctly account for the non-orthogonality of states in superpositions and states not in a superposition.  This paper will examine tests that demand contrasting results for states in a superposition and those not in a superposition.  We will show that it is not possible to determine experimentally whether one's own mind is currently in a superposition, though it is possible to show that the process can occur.

\section{Assumptions regarding experience and experiments}

This paper examines the potential experiments that can be performed in a universe that is completely determined by a state vector living in a complex Hilbert space with unitary time evolution. When we say that a subspace of this state is an ``experimenter'' who is capable of believing things, we assume that those beliefs are purely the function of the state of that subspace.  In practice, we imagine this subspace to be something like the experimenter's brain, though the precise details of the scope of these structures is not important, providing it can be encoded within the larger Hilbert space.

One detail that is required for our proofs is that any two states of the universe where the experimenter believes contradictory things must be orthogonal.  Under any standard interpretation of quantum mechanics, if two states are not orthogonal, then it is not possible to distinguish between the two with perfect fidelity, which we assume must be possible for contradictory beliefs.

To illustrate how these non-probabilistic assumptions will be used, we give a definition and then prove a simple lemma:

\begin{mydef}
  Say that states of the universe $\ket{1}$ and $\ket{2}$ are \emph{distinguishable by the experimenter} (or simply \emph{distinguishable}) if there is some unitary transformation $\hat{U}$ such that the experimenter believes something different in $\hat{U} \ket{1}$ and $\hat{U} \ket{2}$.
\end{mydef}

\begin{lemma}
Distinguishable states are orthogonal.
\end{lemma}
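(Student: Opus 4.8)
The plan is to combine two ingredients: the paper's assumption that any two states in which the experimenter believes contradictory things are orthogonal, and the elementary fact that a unitary transformation preserves inner products. First I would unpack the definition of distinguishability: by hypothesis there exists a unitary $\hat{U}$ such that the experimenter believes something different in the transformed states $\hat{U}\ket{1}$ and $\hat{U}\ket{2}$.

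I would then argue that ``believing something different'' is, for the purposes of the assumption, the same as ``believing contradictory things'': if the experimenter's belief about some proposition differs between $\hat{U}\ket{1}$ and $\hat{U}\ket{2}$, then in one of these states a belief is held whose content is incompatible with what is held in the other. The orthogonality assumption then forces $\bra{1}\hat{U}^\dagger\hat{U}\ket{2} = 0$. Finally, since $\hat{U}$ is unitary we have $\hat{U}^\dagger\hat{U} = \hat{I}$, so $\bra{1}\hat{U}^\dagger\hat{U}\ket{2} = \langle 1 | 2\rangle$, and therefore $\langle 1 | 2\rangle = 0$, which is precisely the assertion that $\ket{1}$ and $\ket{2}$ are orthogonal.

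The main obstacle is conceptual rather than technical: one must justify the passage from the definition's phrase ``believes something different'' to the assumption's phrase ``believes contradictory things,'' since only the latter is directly licensed to conclude orthogonality. Once that identification is granted, the remainder is an immediate one-line consequence of unitarity, and no probabilistic or quantitative reasoning is required.
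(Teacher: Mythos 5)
Your proposal is correct and follows essentially the same route as the paper's own proof: apply the contradictory-beliefs-implies-orthogonality assumption to $\hat{U}\ket{1}$ and $\hat{U}\ket{2}$, then use $\hat{U}^\dagger\hat{U} = \hat{I}$ to pull the orthogonality back to $\ket{1}$ and $\ket{2}$. The only addition is your explicit flagging of the slippage between ``believes something different'' and ``believes contradictory things,'' which the paper itself elides; that is a fair observation but does not change the argument.
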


\begin{proof}
  $\braket{1|2} = \braket{1 | \hat{U}^\dagger \hat{U} | 2}$. By hypothesis, $\hat{U} \ket{1}$ and $\hat{U} \ket{2}$ contain experimenters that believe different things, so $\braket{1 | \hat{U}^\dagger \hat{U} | 2} = 0$. Therefore, $\braket{1|2} = 0$.
\end{proof}

What this lemma tells us is that if two states are distinguishable, there must be some unitary transformation that takes them to an orthogonal state. Since unitary transformations preserve inner products, the states must have been orthogonal to begin with.  This definition and lemma could be easily operated in reverse, and we could motivate that contradictory beliefs must be orthogonal by noting that they can be created by performing an experiment that distinguishes between two completely distinct (orthogonal) initial states.


\section{Assumptions regarding probability}
Later we shall investigate experimental tests of whether one is in a superposition that do not give definitive answers, but still produce non-identical quantum states consisting of superpositions of `Yes' and `No' answers.  This is interpreted as a probabilistic outcome in all interpretations of quantum mechanics, although there is much debate over the validity of doing this in the RSI (see for example \cite{AlbertProb2010}).While it can be shown that the Born rule is the unique way to assign valid probabilities \cite{Everett57}, and can also be justified through considerations of decision theory \cite{Wallace2012} or of self-locating uncertainty \cite{carroll2014many}, for our purposes we shall simply assume that one can sensibly talk about probabilities.  That is to say, if the experimenter measures the spin of an electron in the state $\alpha \ket{\uparrow} + \beta \ket{\downarrow}$, we will make statements such as ``The probability of the experimenter measuring $\uparrow$ is $|\alpha|^2$", while also maintaining that both outcomes of the measurement actually happen, and that the whole experiment proceeds in an entirely deterministic manner.

The question that this paper asks is: ``if the RSI were correct, giving a satisfactory account of probability, and we were in a large-scale superposition of so-called `worlds', would we be able to test this?", without considering the plausibility or lack thereof of the premise.

\section{Impossible Types of Superposition Tests}

The first type of superposition test that we might want is a definitive one: that is, one that will definitely give a ``Yes" answer if the universe started out in a superposition, and definitely give a ``No" answer if the universe did not. This motivates our next definition:

\begin{mydef}
  Call a unitary operator $\hat{T}_D$ a \emph{definitive superposition test} if for each $\ket{i}$,
    \begin{align*}
      \hat{T}_D \ket{i} &= \sum_j \beta_{ij} \ket{N_{ij}} \\
      \intertext{and}
      \hat{T}_D \left( \sum_i \alpha_i \ket{i} \right) &= \sum_k \gamma_k \ket{Y_k}
    \end{align*}
for some set of $\alpha_i$, where $\| \sum_i \alpha_i \ket{i} \| = 1$, and such that each pair of $\ket{N_{ij}}$ and $\ket{Y_k}$ are distinguishable.
\end{mydef}
The states $\ket{N_{ij}}$ are negative results (`No, I wasn't in a superposition') and the $\ket{Y_k}$ are positive results (`Yes, I was in a superposition').

\begin{thm}
  No definitive superposition test exists.
\end{thm}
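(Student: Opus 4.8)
The plan is to derive a contradiction directly from the defining properties of $\hat{T}_D$, playing its linearity against the orthogonality forced by the preceding Lemma. The essential tension is that linearity completely fixes the action of $\hat{T}_D$ on any superposition once its action on each $\ket{i}$ is known, so the image of $\sum_i \alpha_i \ket{i}$ is not free to be a genuine ``Yes'' output if it is already determined to be a combination of ``No'' outputs, and the distinguishability requirement makes these two options mutually exclusive.

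First I would apply linearity of $\hat{T}_D$ to the superposition appearing in the definition, writing
\begin{align*}
  \hat{T}_D \left( \sum_i \alpha_i \ket{i} \right) = \sum_i \alpha_i \hat{T}_D \ket{i} = \sum_i \sum_j \alpha_i \beta_{ij} \ket{N_{ij}}.
\end{align*}
This exhibits the image of the superposition as a vector lying in the span of the negative-result states $\ket{N_{ij}}$. Next I would invoke the definitional requirement that this same image also equals $\sum_k \gamma_k \ket{Y_k}$, a vector lying in the span of the positive-result states. The second ingredient is the Lemma: since each $\ket{N_{ij}}$ is distinguishable from each $\ket{Y_k}$, every such pair is orthogonal, so the span of the $\ket{N_{ij}}$ is orthogonal to the span of the $\ket{Y_k}$.

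The contradiction then follows immediately. The vector $\hat{T}_D \left( \sum_i \alpha_i \ket{i} \right)$ lies in both of two mutually orthogonal subspaces and must therefore be the zero vector. But $\hat{T}_D$ is unitary and $\| \sum_i \alpha_i \ket{i} \| = 1$, so its image must also have unit norm, which is incompatible with being zero. I expect the only point needing care—rather than a genuine obstacle—to be noting that the two spans are orthogonal as subspaces and not merely as sets of generators; this extends from the pairwise orthogonality of the generators by bilinearity of the inner product. Everything else is a short computation, so I anticipate no serious difficulty: the whole force of the argument is the observation that linearity leaves $\hat{T}_D$ no room to respond to a superposition in any way other than the one already dictated by its behaviour on the components.
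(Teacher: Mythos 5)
Your proposal is correct and follows essentially the same route as the paper: use linearity to write the image of the superposition in the span of the $\ket{N_{ij}}$, invoke the Lemma to make that span orthogonal to the span of the $\ket{Y_k}$, conclude the image vector is zero, and contradict unitarity together with the normalisation $\| \sum_i \alpha_i \ket{i} \| = 1$. The paper phrases the key step as the explicit computation $\left\| \sum_k \gamma_k \ket{Y_k} \right\|^2 = \sum_{ijk} \gamma_k^* \alpha_i \beta_{ij} \braket{Y_k | N_{ij}} = 0$, which is exactly the bilinearity argument you flag as the one point needing care.
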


\begin{proof}
 Suppose that $\hat{T}_D$ were a definitive superposition test. Then, we would have
   \begin{align*}
      \hat{T}_D \left( \sum_i \alpha_i \ket{i} \right) &= \sum_k \gamma_k \ket{Y_k} \\
      &= \sum_i \alpha_i \hat{T}_D \ket{i} \\
      &= \sum_{ij} \alpha_i \beta_{ij} \ket{N_{ij}}
    \end{align*}
  However, since each $\ket{N_{ij}}$ and $\ket{Y_k}$ is distinguishable, we have
    \begin{align*}
      \left\| \sum_k \gamma_k \ket{Y_k} \right\|^2 &= \left( \sum_k \gamma_k^* \bra{Y_k} \right) \left( \sum_{ij} \alpha_i \beta_{ij} \ket{N_{ij}} \right) \\
      &= \sum_{ijk} \gamma_k^* \alpha_i \beta_{ij} \braket{Y_k | N_{ij}} \\
      &= 0
    \end{align*}
  However, $\sum_i \alpha_i \ket{i}$ was normalised and $\hat{T}_D$ was unitary by hypothesis. Therefore, we have a contradiction.
\end{proof}

As perhaps expected, the linearity and unitarity of the evolution makes it impossible to prove that we are in a superposition.  Upon learning this, we might want to settle for a test that had inconclusive results, but at least some definite `Yes' results.  In this way, if we ever experience a definite `Yes', then we have learned that we were in a superposition.  This leads us to the next definition:

\begin{mydef}
  Call a unitary operator $\hat{T}_{PD}$ a \emph{partially-definitive superposition test} if
    \begin{align*}
       \hat{T}_{PD} \ket{i} &= \sum_j \beta_{ij} \ket{N_{ij}}, \text{  and}\\
       \hat{T}_{PD} \left( \sum_i \alpha_i \ket{i} \right) &= \sum_{ij} \gamma_{ij} \ket{N_{ij}} + \sum_k \zeta_k \ket{Y_k}
    \end{align*}
  for some set of $\alpha_i$, where $\| \sum_i \alpha_i \ket{i}\| = 1$, $\sum_k \zeta_k \ket{Y_k}$ is non-zero, and such that each pair of $\ket{N_{ij}}$ and $\ket{Y_k}$ are distinguishable.
\end{mydef}

We note that this time, the states $\ket{N_{ij}}$ stand for `No result', as we have not definitively shown that we are not in a superposition when we are in that state.  This is still a contradictory belief to all states $\ket{Y_k}$, as they have clearly resolved disagreement over whether the test has returned a definitive result.

\begin{thm}
  No partially-definitive superposition test exists.
\end{thm}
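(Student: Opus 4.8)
The plan is to exploit linearity exactly as in the proof of the definitive case, but now to track the ``Yes'' component separately using orthogonality rather than discarding the entire output. First I would apply $\hat{T}_{PD}$ to the superposition $\sum_i \alpha_i \ket{i}$ term by term, using the single-state rule $\hat{T}_{PD} \ket{i} = \sum_j \beta_{ij} \ket{N_{ij}}$, to obtain
\begin{align*}
  \hat{T}_{PD} \left( \sum_i \alpha_i \ket{i} \right) = \sum_{ij} \alpha_i \beta_{ij} \ket{N_{ij}}.
\end{align*}
The crucial observation is that this vector lies entirely within the span of the ``No result'' states $\ket{N_{ij}}$, with no $\ket{Y_k}$ component whatsoever.

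Next I would invoke the Lemma: since each $\ket{N_{ij}}$ is distinguishable from each $\ket{Y_k}$, we have $\braket{Y_k | N_{ij}} = 0$ for all $i$, $j$, $k$. Hence the ``Yes'' subspace spanned by the $\ket{Y_k}$ is orthogonal to the span of the $\ket{N_{ij}}$. This is the single fact that converts the presence of extra $\ket{N_{ij}}$ terms in the definitional output (which prevents us from concluding the whole vector is zero, as we could in the definitive case) into a usable statement.

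The contradiction then follows by equating the two expressions for $\hat{T}_{PD}(\sum_i \alpha_i \ket{i})$ and taking the inner product of each with the vector $\sum_k \zeta_k \ket{Y_k}$. On the linearity side the result vanishes by orthogonality, since every $\braket{Y_k | N_{ij}} = 0$; on the definitional side the $\ket{N_{ij}}$ terms likewise drop out and we are left with $\| \sum_k \zeta_k \ket{Y_k} \|^2$. This yields $0 = \| \sum_k \zeta_k \ket{Y_k} \|^2$, which forces $\sum_k \zeta_k \ket{Y_k} = 0$ and contradicts the defining assumption that this vector is non-zero.

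I do not expect a genuine obstacle here, as the argument is structurally parallel to the previous theorem. The one point requiring care is that, unlike the definitive case, the output now legitimately contains ``No result'' terms, so one cannot simply set the entire norm to zero; instead the ``Yes'' content must be isolated by projecting onto $\sum_k \zeta_k \ket{Y_k}$, and the conclusion rests squarely on the non-vanishing hypothesis for that vector, which is exactly what encodes the demand that the test occasionally returns a genuine positive result.
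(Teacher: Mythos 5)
Your proposal is correct and follows essentially the same route as the paper's proof: expand by linearity, use distinguishability to get $\braket{Y_k|N_{ij}}=0$, and project onto $\sum_k \zeta_k \ket{Y_k}$ to force that vector to vanish, contradicting the non-vanishing hypothesis. The only cosmetic difference is that the paper first rearranges to isolate $\sum_k \zeta_k \ket{Y_k}$ as a combination of the $\ket{N_{ij}}$ before taking the inner product, whereas you take the inner product of both sides directly; these are the same argument.
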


\begin{proof}
  Suppose that $\hat{T}_{PD}$ were a partially-definitive superposition test. Then, we would have
    \begin{align*}
      \hat{T}_{PD} \left( \sum_i \alpha_i \ket{i} \right) &= \sum_{ij} \gamma_{ij} \ket{N_{ij}} + \sum_k \zeta_k \ket{Y_k} \\
      &= \sum_i \alpha_i \hat{T}_{PD} \ket{i} \\
      &= \sum_{ij} \alpha_i \beta_{ij} \ket{N_{ij}} \\
      \sum_k \zeta_k \ket{Y_k} &= \sum_{ij} (\alpha_i \beta_{ij} - \gamma_{ij} ) \ket{N_{ij}}
    \end{align*}
  However, since each $\ket{N_{ij}}$ and $\ket{Y_k}$ is distinguishable,
    \begin{align*}
      \left\| \sum_k \zeta_k \ket{Y_k} \right\|^2 &= \left( \sum_k \zeta_k^* \bra{Y_k} \right) \left( \sum_{ij} (\alpha_i \beta_{ij} - \gamma_{ij} ) \ket{N_{ij}} \right) \\
      &= \sum_{ijk} \zeta_k^* (\alpha_i \beta_{ij} - \gamma_{ij} ) \braket{Y_k | N_{ij}} \\
      &= 0
    \end{align*}
  Therefore, $\sum_k \zeta_k \ket{Y_k}$ is the zero vector, a contradiction.
\end{proof}

We next wonder if we could construct some probabilistic test of whether we were in a superposition. We would like this test to allow reliable `memories' of the distinguishable branches that the experimenter was in immediately before the test.  Without this ability for the test to preserve memory, it is impossible to answer the question: ``Am I, right now, in a superposition, even though my world currently seems quasi-classical?", as those quasi-classical experiences must be destroyed by the test itself.

This leads us to the formal definition:
 
 \begin{mydef}
   Call a unitary operator $\hat{T}_{BD}$ a \emph{branch-discriminating probabilistic superposition test} if
     \begin{align*}
       \hat{T}_{BD} \ket{i} &= \sum_j (\beta_{ij} \ket{N_{ij}} + \gamma_{ij} \ket{Y_{ij}} ), \text{   and}\\
       \hat{T}_{BD} \left( \sum_i \alpha_i \ket{i} \right) &= \sum_{ij} (\zeta_{ij} \ket{N_{ij}} + \eta_{ij} \ket{Y_{ij}} )
     \end{align*}
for some set of $\alpha_i$ such that $\| \sum_i \alpha_i \ket{i} \| = 1$, each pair of $\ket{N_{ij}}$ and $\ket{Y_{i'j'}}$ is distinguishable, $\ket{Y_{ij}}$ and $\ket{Y_{i'j'}}$ are distinguishable for $i \neq i'$, and there exists some $i$ such that
  \begin{align*}
   \left\| \sum_j \gamma_{ij} \ket{Y_{ij}} \right\|^2 &< \left\| \sum_j \eta_{ij} \ket{Y_{ij}} \right\|^2
  \end{align*}
 \end{mydef}

The condition that $\ket{Y_{ij}}$ and $\ket{Y_{i'j'}}$ are distinguishable encodes our desire to be able to discriminate between before-test branches after the test, and the condition that $\| \sum_j \gamma_{ij} \ket{Y_{ij}} \|^2 < \| \sum_j \eta_{ij} \ket{Y_{ij}} \|^2$ ensures that for at least one of the branches, there is some set of `Yes' outcomes that, taken together, are more likely if you are in a superposition than if you were in a single-valued state.

\begin{thm}
  No branch-discriminating probabilistic superposition test exists.
\end{thm}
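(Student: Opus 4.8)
The plan is to assume a branch-discriminating test $\hat{T}_{BD}$ exists and derive a contradiction, in the same spirit as the preceding two theorems; the new ingredient is that the mutual distinguishability of the `Yes' outcomes across branches gives enough structure to compare the two expressions branch by branch. First I would expand the image of the superposition by linearity,
\[
\hat{T}_{BD}\left(\sum_i \alpha_i \ket{i}\right) = \sum_i \alpha_i \hat{T}_{BD}\ket{i} = \sum_{ij}\left(\alpha_i \beta_{ij}\ket{N_{ij}} + \alpha_i \gamma_{ij}\ket{Y_{ij}}\right),
\]
and equate it with the defining expression $\sum_{ij}(\zeta_{ij}\ket{N_{ij}} + \eta_{ij}\ket{Y_{ij}})$.

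The central step is to peel apart the relevant subspaces using the distinguishability hypotheses and the Lemma. Since every $\ket{N_{ij}}$ is orthogonal to every $\ket{Y_{i'j'}}$, the `No' and `Yes' sectors occupy orthogonal subspaces, so projecting the equality onto the span of the $\ket{Y_{ij}}$ kills the `No' terms and leaves $\sum_{ij}\eta_{ij}\ket{Y_{ij}} = \sum_{ij}\alpha_i\gamma_{ij}\ket{Y_{ij}}$. Then, because $\ket{Y_{ij}}$ and $\ket{Y_{i'j'}}$ are distinguishable and hence orthogonal whenever $i\neq i'$, the per-branch subspaces $V_i = \mathrm{span}\{\ket{Y_{ij}}\}_j$ are mutually orthogonal, and projecting onto a single $V_i$ isolates that branch:
\[
\sum_j \eta_{ij}\ket{Y_{ij}} = \alpha_i \sum_j \gamma_{ij}\ket{Y_{ij}}.
\]

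Taking norms gives $\|\sum_j \eta_{ij}\ket{Y_{ij}}\|^2 = |\alpha_i|^2\,\|\sum_j \gamma_{ij}\ket{Y_{ij}}\|^2$ for every $i$. The before-test branches $\ket{i}$ are distinguishable, hence orthonormal, so the normalization $\|\sum_i \alpha_i\ket{i}\| = 1$ forces $\sum_i|\alpha_i|^2 = 1$ and in particular $|\alpha_i|^2 \le 1$. Substituting, $\|\sum_j \eta_{ij}\ket{Y_{ij}}\|^2 \le \|\sum_j \gamma_{ij}\ket{Y_{ij}}\|^2$ for every $i$, which flatly contradicts the defining requirement that some branch obey the strict opposite inequality.

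I expect the isolation step to be the crux, since it is exactly where this argument needs more than the previous two theorems provided: the added hypothesis that the `Yes' branches are mutually distinguishable---imposed precisely so that the test preserves the experimenter's pre-test memories---is what furnishes the branch-wise orthogonality that makes the projection onto $V_i$ clean. The only routine wrinkle is the degenerate case $\sum_j \gamma_{ij}\ket{Y_{ij}} = 0$, where the norm identity immediately gives $\sum_j \eta_{ij}\ket{Y_{ij}} = 0$ and the strict inequality fails outright.
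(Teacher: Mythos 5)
Your proposal is correct and follows essentially the same route as the paper: use linearity and the distinguishability-implies-orthogonality lemma to isolate the relation $\sum_j \eta_{ij}\ket{Y_{ij}} = \alpha_i \sum_j \gamma_{ij}\ket{Y_{ij}}$ branch by branch, then note $|\alpha_i|^2 \le 1$ to contradict the probabilistic condition. The only cosmetic difference is that you project onto the mutually orthogonal subspaces $V_i$ and take norms of the resulting vectors directly, whereas the paper first orthonormalizes the $\ket{Y_{ij}}$ within each branch and extracts the coefficient identity $\eta_{ij} = \alpha_i\gamma_{ij}$; you also make explicit the assumption $|\alpha_i|\le 1$ that the paper uses implicitly.
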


\begin{proof}
  Without loss of generality, we may assume that $\braket{Y_{ij} | Y_{ik}} = \delta_{jk}$ -- that is, that the `Yes' states for each branch are orthogonal to each other. If not, we can find an orthonormal basis for the span of the $\ket{Y_{ij}}$ (with fixed $i$ and variable $j$) and write the superpositions in terms of these bases. This preserves squared norm and orthogonality of $\ket{Y_{ij}}$ and $\ket{Y_{i'j'}}$, and means that the probabilistic condition may be restated as
    \begin{align*}
      \sum_j |\gamma_{ij}|^2 < \sum_j |\eta_{ij}|^2
    \end{align*}

Now, by linearity,
  \begin{align*}
    \sum_{ij} (\zeta_{ij} \ket{N_{ij}} + \eta_{ij} \ket{Y_{ij}} ) &= \sum_{ij} ( \alpha_i \beta_{ij} \ket{N_{ij}} + \alpha_i \gamma_{ij} \ket{Y_{ij}} )
  \end{align*}
By taking the inner product of both sides with each of the $\ket{Y_{ij}}$, we see that $\eta_{ij} = \alpha_i \gamma_{ij}$, and therefore that
   \begin{align*}
    \sum_j |\eta_{ij}|^2 &= \sum_j |\alpha_i|^2 |\gamma_{ij}|^2 \\
    &\leq \sum_j |\gamma_{ij}|^2
  \end{align*}
This contradicts our probabilistic condition.
\end{proof}

\section{Allowable Tests}

The previous theorems do not rule out working superposition tests. Theorems 1 and 2 show that such a test must be probabilistic, and theorem 3 shows that no probabilistic test is possible when the memories of being in each part of the superposition are retained.
In other words, in order to produce the necessary interference that can determine the existence of a superposition, multiple branches must be coupled to exactly the same final state, including states of belief of the experimenter.  

Tests of this kind are schematically rather simple.
Suppose that the initial state of the universe is either $\ket{1}$, $\ket{2}$, or $(\ket{1} + \ket{2})/\sqrt{2}$. Then, we could have a test $\hat{T}$ defined by $\hat{T} \ket{1} = (\ket{N} + \ket{Y}) / \sqrt{2}$ and $\hat{T} \ket{2} = (- \ket{N} + \ket{Y}) / \sqrt{2}$. Then, by linearity, $\hat{T} (\ket{1} + \ket{2} ) / \sqrt{2} = \ket{Y}$, i.e.\ the outcomes interfere, so finding oneself in $\ket{Y}$ is evidence that one was originally in a superposition. 

To picture this test, imagine an experimenter who wants to perform an experiment that would distinguish between the RSI and any objective collapse interpretation. They enter an isolated machine with a piece of blank paper and an electron in the spin state $(\ket{\uparrow} + \ket{\downarrow})/\sqrt{2}$.  The machine can apply arbitrary unitary transformations to its contents. Inside, the experimenter measures the electron in the $\{ \ket{\uparrow}, \ket{\downarrow}\}$ basis. If the RSI is correct, the resulting state is $( \ket{\uparrow}\ket{E[\uparrow]} + \ket{\downarrow} \ket{E[\downarrow]}) / \sqrt{2}$, where $\ket{E[\uparrow]}$ is the state of the experimenter seeing a spin-up electron, and $\ket{E[\downarrow]}$ is the state of the experimenter seeing an spin-down electron. 
If an objective collapse interpretation is correct, the resulting state is either $\ket{\uparrow} \ket{E[\uparrow]}$ or $\ket{\downarrow} \ket{E[\downarrow]}$ with equal probability. The enclosing machine then applies any transformation to the room satisfying  
  \begin{align*}
    \ket{\uparrow} \ket{E[\uparrow]} &\rightarrow (\ket{N} + \ket{Y}) / \sqrt{2} \\
    \ket{\downarrow} \ket{E[\downarrow]} &\rightarrow (- \ket{N} + \ket{Y}) / \sqrt{2}
  \end{align*}
  
The experimenter's memories are uncorrelated to the measurement of the electron, but the piece of paper says ``No" if they happen to be in the state $\ket{N}$, and ``Yes" if they happen to be in the state $\ket{Y}$. They then walk outside and put their piece of paper on a table.  The experiment can then be repeated an arbitrary number of times.   

Over time, if there is a roughly equal number of ``No" and ``Yes" results, then the experimenter can become confident that the experiment is showing that they are not existing as a coherent superposition of states within the device.  Alternatively, if all of the results say ``Yes", then the experimenter has evidence of having been personally in a superposition, thus giving an arbitrary degree of confidence in the RSI.

We also note that this test relies crucially on both the `memory loss' experienced by the experimenter, and the knowledge of the phase of the initial superposition.  The full quantum state of the experimenter, including their memories, is being generated by the machine.  This means that it is possible for them to have any memories at all, but we have shown that they must be identical across multiple branches of the superposition, and therefore cannot be correlated with the actual relevant measurement results.  The previous test proposed by Deutsch \cite{Deutsch1986}shows that it is theoretically possible to show that a separate sentient being can be placed in a superposition, our allowable test shows that it is also possible for an experimenter to show that they themselves can be placed in a superposition.

By noting the equality of the density matrices, it is trivial to show that no experiment can distinguish between a complete lack of knowledge of the phase of the initial superposition and a stochastic mixture of single-valued states:

\begin{widetext}
    \begin{align*}
      \hat{\rho} &= \int_0^{2\pi} \frac{d\phi_1}{2\pi} \int_0^{2\pi} \frac{d\phi_2}{2\pi} \dotsi \int_0^{2\pi} \frac{d\phi_N}{2\pi} \left( \sum_j e^{i\phi_j} \alpha_j \ket{j} \right) \left( \sum_{j'} e^{-i\phi_{j'}} \alpha_{j'}^* \bra{j'} \right) \\
      &= \sum_j \alpha_j \alpha_j^* \ketbra{j}{j}
    \end{align*}
\end{widetext}
This emphasises that it is never possible to prove experimentally that a system is in an arbitrary superposition, only in a specific superposition.  So it is only possible to aspire to prove that we can be put into a specific superposition of quasi-classical states.

\section{Discussion and Conclusions}
It is clear that experiments that show increasingly large coherences can narrow the parameter regime in which spontaneous collapse theories might exist, but there is an enormous gap between current experiments and coherence experiments on truly macroscopic objects.  To make things even more challenging, while the majority of spontaneous collapse theories depend only on system size to define a decoherence rate, there are some theories that postulate the existence of genuinely classical objects and/or fields, although there are strong constraints on such theories \cite{Kapustin2013}.
Therefore, to avoid all loopholes, it would be eventually be necessary to include an actual sentient being in the interference experiment.  This paper characterises the limits of extending that experiment to its logical conclusion - personal experience.

We have characterised the limitations on potential tests of the RSI, showing that they cannot be definitive, and probabilistic tests must be unable to retain any knowledge of the contradictory beliefs held during a macroscopic superposition.  We show that it is nevertheless theoretically possible to demonstrate that oneself can be placed in a superposition of distinct quasi-classical states to any degree of certainty (below absolute) that is desired.

\bibliography{mwiPaper}
%
%
%

\end{document}